\title{A Versatile Algorithm to Generate Various Combinatorial Structures}
\author{Pramod Ganapathi\thanks{Associate Software Engineer, IBM, India; pramodghegde@in.ibm.com}, Rama B\thanks{Graduate Student, Department of Computer Science and Automation, Indian Institute of Science, India; ramab@csa.iisc.ernet.in}}
\newtheorem{defn}{Definition}
\newtheorem{theorem}[defn]{Theorem}
\newtheorem{lemma}[defn]{Lemma}
\newtheorem{corollary}[defn]{Corollary}
\begin{document}

\thispagestyle{empty}

\label{firstpage}
\maketitle

\begin{abstract}
Algorithms to generate various combinatorial structures find tremendous importance in computer science. In this paper, we begin by reviewing an algorithm proposed by Rohl~\cite{rohl1} that generates all unique permutations of a list of elements which possibly contains repetitions, taking some or all of the elements at a time, in any imposed order. The algorithm uses an auxiliary array that maintains the number of occurrences of each unique element in the input list. We provide a proof of correctness of the algorithm. We then show how one can efficiently generate other combinatorial structures like combinations, subsets, $n$-Parenthesizations, derangements and integer partitions \& compositions with minor changes to the same algorithm.
\end{abstract}

\floatname{algorithm}{Procedure}

\section{Introduction}
\label{sec:intro}

Algorithms which generate combinatorial structures like permutations, combinations, etc. come under the broad category of combinatorial algorithms~\cite{book_KrSt}. These algorithms find importance in areas like Cryptography, Molecular Biology, Optimization, Graph Theory, etc. Combinatorial algorithms have had a long and distinguished history. Surveys by Akl~\cite{selim_combsurvey}, Sedgewick~\cite{sedgewick_perm}, Ord Smith~\cite{ordsmith_1,ordsmith_2}, Lehmer~\cite{lehmer_tricks}, and of course Knuth's thorough treatment~\cite{knuth_perm,knuth_fas3}, successfully capture the most important developments over the years. The literature is clearly too large for us to do complete justice. We shall instead focus on those which have directly impacted our work.

The well known algorithms which solve the problem of generating all $n!$ permutations of a list of $n$ unique elements are Heap Permute~\cite{heap_permute}, Ives~\cite{ives_perm} and Johnson-Trotter~\cite{trotter_perm}. In addition, interesting variations to the problem are - algorithms which generate only unique permutations of a list of elements with possibly repeated elements, like \cite{bratley_unique, chase_unique, sag_repetitions}, algorithms which generate permutations in lexicographic order, like \cite{smith_lexico1, schrack_lexico, shen_lexico} and algorithms which genenerate permutations taking $r (< n)$ elements at a time. With regard to these variations, the most complete algorithm, in our opinion, is the one proposed by Rohl~\cite{rohl1} which possesses the following salient features:
\begin{enumerate}
\item Generates all permutations of a list considering some or all elements at a time
\item Generates only unique permutations
\item Generates permutations in any imposed order
\end{enumerate}
It is of course possible to extend any permutation generating algorithm to exhibit all of the above features. However, when such naive algorithms are given inputs similar to \emph{aaaaaaaaaaaaaaaaaaab}\footnote{There are 20 unique permutations of length 20}, the number of wasteful computations can be unmanageably huge. Clearly, the need for specially tailored algorithms is justified.

We provide a rigorous proof of correctness of Rohl's algorithm. We also show interesting adaptations of the algorithm to generate various other combinatorial structures efficiently.

The paper is organised as follows: we begin by introducing terminologies and notations that will be used in the paper in Section~\ref{sec:prelim}, and presenting Rohl's algorithm in Section~\ref{sec:algo}. Section~\ref{sec:anal} contains a detailed analysis of the algorithm, which also includes the proof of correctness of the algorithm (missing in Rohl's paper). We then present various efficient extensions of Rohl's algorithm in Section~\ref{sec:exten}, to generate many more combinatorial structures.

Rohl's algorithm involves two stages. The first stage builds up an auxiliary array that maintains the number of occurrences of each unique element in the input list. The second stage uses this auxiliary array to effect and generates the required combinatorial structure. We present a recursive representation of the algorithm. It would be quite easy to transform the recursive version to an iterative one~\cite{rec2iter}.

\section{Preliminaries}
\label{sec:prelim}

In this section, we introduce the key terms, concepts and notations that will be used in the paper.

\subsection{Basic Definitions} 

\noindent \textbf{Set:} A collection of elements without repetitions

\noindent \textbf{List:} A collection of elements possibly with repetitions

\noindent \textbf{Permutation:} A permutation of a list is an arrangement of the elements of the list in any order, taking some or all elements at a time \\
E.g: Consider a list $\{a,b,c,d\}$; Various permutations are - $\{d,a,b,c\}$, $\{a,d,c\}$, $\{b,d,a\}$, $\{a,d\}$, $\{d,c\}$, $\{b,a\}$

\noindent \textbf{$r$-Permutation Set:} The $r$-permutation set of a list is defined as the set of all possible $r$-permutations of the list, where an $r$-permutation is defined as a permutation taking exactly $r$ $(1 \le r \le size(list))$ elements at a time\\
E.g: Consider a list $\{a, b, c\}$. We get: $\left\{a, b, c\right\}$, $\left\{ab, ac, ba, bc, ca, cb\right\}$ and $\{abc$, $acb$, $bac$, $bca$, $cab$, $cba\}$\footnote{Each permutation in an $r$-permutation set is actually a list of elements. For compactness, we represent the elements in concatenated form.} as the $1$-permutation set, $2$-permutation set and $3$-permutation set, respectively.

The algorithm takes as input:
\begin{itemize}
 \item A list of $n$ elements to permute, called input list: $\mathcal{L} = \{l_1, l_2, \ldots, l_n\}$ containing $p (\le n)$ unique elements
 \item An integer $r$, $1 \le r \le n$, which denotes the number of elements to be taken at a time during permuting
 \item A set of $p$ elements which are actually the $p$ unique elements of $\mathcal{L}$ in a particular sequence, called Order Set (or just Order): $\mathcal{O} = \{o_1, o_2, \ldots, o_p\}$
\end{itemize}

It produces as output the $r$-permutation set of $\mathcal{L}$, which is a set of $m$ $r$-permutations: $\mathcal{P}^{r}(\mathcal{L}) = \{\mathcal{P}^r_1, \mathcal{P}^r_2, \ldots, \mathcal{P}^r_m\}$ where each $\mathcal{P}^r_i$ is in turn is a list of $r$ elements: $\mathcal{P}^r_i = \{p^r_{i1}, \ldots, p^r_{ir}\}$ $\forall\,\,(1 \le i \le m$). The algorithm works such that $P^r(\mathcal{L})$ follows the order $\mathcal{O}$ (definition and explanation in Section~\ref{sec:order}). Quite obviously, $p^r_{ij} \in \mathcal{L}\,\,\forall\,(1 \le i \le m),\,(1 \le j \le r)$. Note that we use the terms $r$-permutation ({set}) and permutation ({set}) interchangeably, if clear from the context.

\subsection{Order of Permutation Set}
\label{sec:order}

Consider the $3$-permutation set of the list $\{a,b,c\}: \{abc$, $acb$, $bac$, $bca$, $cab$, $cba\}$. We see that the permutations are in lexicographic order, i.e they follow the order $\{a,b,c\}$. At the same time, the $3$-permutation set \{$bac$, $bca$, $abc$, $acb$, $cba$, $cab$\} follows the order $\{b, a, c\}$. We shall now introduce a formal definition of what we mean when we say  $P^r(\mathcal{L})$ follows the order $\mathcal{O}$.

\noindent \textbf{Discriminating Index:} The discriminating index between two different $r$-permutations is the first (minimum) index in both permutations at which the elements differ from each other. Given two permutations $\mathcal{P}^r_a$ and $\mathcal{P}^r_b$, we denote the discriminating index as 
\[\mathcal{D}(\mathcal{P}^r_a, \mathcal{P}^r_b) = \,\,\textbf{min}\{x: p^r_{ax} \neq p^r_{bx}\}\]
Also, assume that $\mathcal{I}_{\mathcal{O}}(p^r_{ij})$ denotes the index of the element $p^r_{ij}$ in $\mathcal{O}$. Thus $\mathcal{I}_{\mathcal{O}}(p^r_{ij}) = x \,\,\Longleftrightarrow\,\, o_x = p^r_{ij}$.

$\mathcal{P}^r(\mathcal{L})$ is said to \emph{follow} an order $\mathcal{O}$ if:
\begin{itemize}
\item $\forall\,(1 \le i \le m),\,(1 \le j \le r) \qquad p^r_{ij} \in \mathcal{O}$
\item  For every pair of permutations in $\mathcal{P}^r(\mathcal{L})$, the elements at the discriminating index have the same positions relative to each other in $\mathcal{O}$, as do the pair of permutations in $\mathcal{P}^r(\mathcal{L})$ \\
$\forall \,\, (1 \le a < b \le m): \,\,\,\,\,\,\,\mathcal{I}_\mathcal{O}(p^r_{ad}) < \mathcal{I}_\mathcal{O}(p^r_{bd})$
\end{itemize}

where $d = \mathcal{D}(\mathcal{P}^r_a, \mathcal{P}^r_b)$. It has to be noted that similar definitions hold for both sets of combinations and derangements too.

\subsection{Auxiliary Array}
\label{hashorder}

Central to the algorithm is an auxiliary array denoted as $CountArray$. It is an array of integers built in such a way that $CountArray$[$i$] gives the number of occurrences of element $o_i$ in the {input list $\mathcal{L}$}. $CountArray$ is of size $p$ as it is maintained parallel to $\mathcal{O}$.

As an example, consider $\mathcal{L} = \{a,c,b,a,a,c\}$, $\mathcal{O} = \{c,a,b\}$; here $n = 6$ and $p = 3$. We have $o_1$ (= c) occurring twice, $o_2$ (= a) occurring thrice and $o_3$ (= b) occurring once. Thus we must have $CountArray = \{2,3,1\}$. Instead, if we had $\mathcal{O} = \{a,b,c\}$, we would have $CountArray = \{3,1,2\}$.

\section{The Algorithm}
\label{sec:algo}

In this section, we introduce the algorithm. The algorithm takes $\mathcal{L}$, $r$ and $\mathcal{O}$ as input and produces $\mathcal{P}^r(\mathcal{L})$ as output, while ensuring that $\mathcal{P}^r(\mathcal{L})$ follows $\mathcal{O}$. The first stage involves building $CountArray$ while the second stage involves recursive generation of $\mathcal{P}^r(\mathcal{L})$ using $CountArray$.

\subsection{Building $CountArray$}

The first stage involves building $CountArray$ parallel to $\mathcal{O}$ in a way that $CountArray[i]$ is an integer which represents the number of occurrences of $o_i$ in $\mathcal{L}$. Procedure~\ref{countarray} is a pseudocode representation of how to build $CountArray$, given $\mathcal{L}$ and $\mathcal{O}$. We assume the existence of the function $\mathcal{I}_{\mathcal{O}}$ which takes an element of $\mathcal{L}$ as a parameter and returns the position of the element in $\mathcal{O}$. We can say ${\mathcal{I}_{\mathcal{O}}(l_i)} = x \Leftrightarrow o_x = l_i$. A naive implementation of $\mathcal{I}_{\mathcal{O}}$ would be to perform a linear search over $\mathcal{O}$, and this would have a worst case runtime of $O(p)$. However, if needed, we can achieve $O(1)$ runtime using hash functions. We leave the implementation details to the reader.

\begin{algorithm}
\caption{: Building $CountArray$}
\label{countarray}
\begin{algorithmic}[1]
 \STATE \textbf{Input:} $\mathcal{L}$ and $\mathcal{O}$
 \STATE \textbf{Output:} $CountArray$
 \medskip
 \STATE $CountArray[1 \ldots p] \gets 0$
 \FOR{$i$ $\gets 1$ to $n$}
	\STATE $pos \gets \mathcal{I}_{\mathcal{O}}(l_i)$
	\STATE $CountArray[pos] \gets CountArray[pos] + 1$
 \ENDFOR
\end{algorithmic}
\end{algorithm}

\subsection{Generating Permutations}

Procedure~\ref{countarray} builds $CountArray$. The second stage involves using the recursive routine APR to generate $\mathcal{P}^r(\mathcal{L})$, following order $\mathcal{O}$, as output. To this end, we use an output array $\mathcal{R}$.

Procedure~\ref{permute} is a pseudocode representation of APR. It assumes non-local existence of the following: $CountArray$, $\mathcal{O}$, $r$, $p$ and $\mathcal{R}$. Also, it has a local parameter - an integer $index$. Thus the function's header takes the form \textbf{APR($index$)}. It is invoked by the function call APR(1). As APR recursively calls itself ($1 \le index \le r$), $\mathcal{R}[1, \ldots, r]$ is populated using $CountArray$.

\begin{algorithm}
\begin{algorithmic}[1]
\caption{: APR($index$) - Permutations}
\label{permute}
\STATE \textbf{Local:} $index$
\STATE \textbf{Global:} $CountArray$, $\mathcal{O}$, $r$, $p$ and $\mathcal{R}$
\medskip
\IF{$index > r$}
	\STATE Print $\mathcal{R}[1, \ldots, r]$
	\RETURN
\ELSE
	\FOR{$i \gets 1$ to $p$}
		\IF{$CountArray[i] \ge 1$} \label{proc:apr:line:cond}
			\STATE $\mathcal{R}[index] \gets o_i$ 
			\STATE $CountArray[i] \gets CountArray[i] - 1$
			\STATE APR($index + 1$)
			\STATE $CountArray[i] \gets CountArray[i] + 1$
		\ENDIF
	\ENDFOR
\ENDIF
\end{algorithmic}
\end{algorithm}

\section{Algorithm Analysis}
\label{sec:anal}

In this section, we begin by explaining how APR works and showing the recursion tree for a particular example. We then give a formal proof of its correctness. We conclude by establishing an upper bound on the runtime of APR.

We know that the contents of $CountArray$ are constantly changing. Given a particular index in $\mathcal{R}$ we assign to it an element $o_i \in \mathcal{O}\,(1 \le i \le p)$ only if $CountArray[i] \ge 1$, at that particular time. We call the assignment \emph{minimal} if we choose the $o_i$ corresponding to the minimum $i$ for which $CountArray[i] \ge 1$. More formally, we can say that an assignment of $o_i$ to a particular index in $\mathcal{R}$ is minimal iff $\nexists\,\,o_x \in \mathcal{O};\,1 \le x < i$ such that $CountArray[x] \ge 1$. Also we say \emph{populates $\mathcal{R}[i \ldots j]$ minimally} to signify that minimal assignment is done at every index from $i$ to $j$ (inclusive).

\subsection{Recursion Tree}
\label{subsec:rectree}

Let us analyse how APR functions by referring to Procedure~\ref{permute}. The function, at any recursion level, begins by searching for the first available element, i.e. the first index $i$ at which $CountArray[i] \ge 1$ (lines 7-8). Once found, the corresponding element ($o_i$) is assigned at the current index in $\mathcal{R}$ (line 9). It then decrements $CountArray[i]$ to reflect the fact that that particular element was assigned (line 10). After that, there is a recursive call and the control moves to the next recursion level and deeper thereafter performing the same set of functions (lines 7-10). Once the control returns to the current recursion level, it de-assigns the element that was assigned to the current index in $\mathcal{R}$, by incrementing $CountArray[i]$ (line 10). It then moves to the next available element in $CountArray$ and performs the process of assigning and recursion all over again.

\begin{figure*}[!htp]
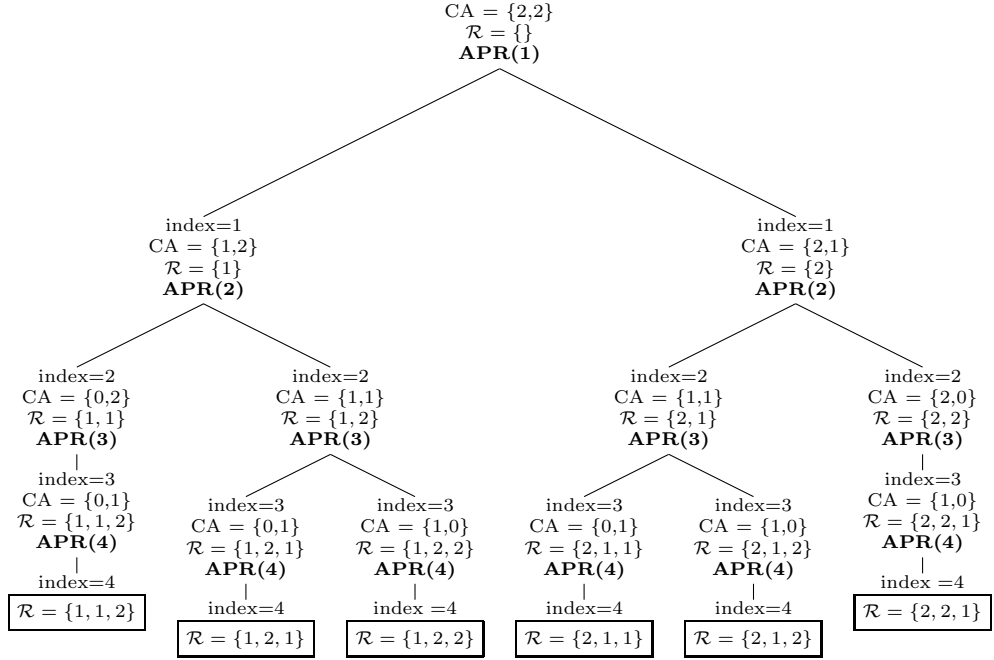
 \centering
{\scriptsize \Tree [.{CA = \{2,2\} \\ $\mathcal{R}=\{\}$ \\ \textbf{APR(1)}} 
	[.{index=1 \\ CA = \{1,2\} \\ $\mathcal{R}=\{1\}$ \\ \textbf{APR(2)}} 
		[.{index=2 \\ CA = \{0,2\} \\ $\mathcal{R}=\{1,1\}$ \\ \textbf{APR(3)}}
			 [.{index=3 \\ CA = \{0,1\} \\ $\mathcal{R}=\{1,1,2\}$ \\ \textbf{APR(4)}} 
				[.{index=4 \\ \fbox{$\mathcal{R}=\{1,1,2\}$}} ] ] ]
		 [.{index=2 \\ CA = \{1,1\} \\ $\mathcal{R}=\{1,2\}$ \\ \textbf{APR(3)}}
			 [.{index=3 \\ CA = \{0,1\} \\ $\mathcal{R}=\{1,2,1\}$ \\ \textbf{APR(4)}}
				 [.{index=4 \\ \fbox{$\mathcal{R}=\{1,2,1\}$}} ] ] 
			[.{index=3 \\ CA = \{1,0\} \\ $\mathcal{R}=\{1,2,2\}$ \\ \textbf{APR(4)}} 
				[.{index =4 \\ \fbox{$\mathcal{R}=\{1,2,2\}$}} ] ] ] ]
	 [.{index=1 \\ CA = \{2,1\} \\ $\mathcal{R}=\{2\}$ \\ \textbf{APR(2)}}
		 [.{index=2 \\ CA = \{1,1\} \\ $\mathcal{R}=\{2,1\}$ \\ \textbf{APR(3)}}
			 [.{index=3 \\ CA = \{0,1\} \\ $\mathcal{R}=\{2,1,1\}$ \\ \textbf{APR(4)}}
				 [.{index=4 \\ \fbox{$\mathcal{R}=\{2,1,1\}$}} ] ]
			 [.{index=3 \\ CA = \{1,0\} \\ $\mathcal{R}=\{2,1,2\}$ \\ \textbf{APR(4)}} 
				[.{index=4 \\ \fbox{$\mathcal{R}=\{2,1,2\}$}} ] ] ]
		 [.{index=2 \\ CA = \{2,0\} \\ $\mathcal{R}=\{2,2\}$ \\ \textbf{APR(3)}}
			 [.{index=3 \\ CA = \{1,0\} \\ $\mathcal{R}=\{2,2,1\}$ \\ \textbf{APR(4)}} 
				[.{index =4 \\ \fbox{$\mathcal{R}=\{2,2,1\}$}} ] ] ] ] ]}
\caption{Recursion Tree of APR for $\mathcal{L}=\{1,1,2,2\}$; $r=3$; $\mathcal{O} = \{1,2\}$}
\label{rectree}
\end{figure*}

 We venture an example to better illustrate the process. Assume we have $\mathcal{L} = \{1,1,2,2\}$, $r = 3$ and $\mathcal{O}=\{1,2\}$. When $\mathcal{L}$ and $\mathcal{O}$ are fed to Procedure~\ref{countarray}, we would obtain $CountArray$[1] = $CountArray$[2] = 2. Procedure~\ref{permute} is then invoked. The recursion tree that would be obtained is shown in Figure~\ref{rectree} ($CountArray$ is abbreviated to CA). The root of the tree represents the initial state, i.e. $CountArray = \{2,2\}$ and $\mathcal{R}=\{\}$. Every descendant of the root represents a particular recursion depth, indicated by the value of $index$. At every node, we indicate the values in $CountArray$ and $\mathcal{R}$ just before going down to the next recursion level followed by the function call that initiates the next recursion level. Once we go four levels deep into the recursion, i.e. $index=4$, the contents of $\mathcal{R}$ would be output because $index > r$ (lines 3-4 of Procedure~\ref{permute}). The control then returns to the previous recursion level (line 5). We observe that $\mathcal{P}^3(\mathcal{L})$, i.e. the collection of all leaves of the recursion tree from left to right, follows the order $\{1, 2\}$.

\subsection{Proof of Correctness}
\label{subsec:proof}

In this section, we shall use $\mathcal{P}_i$ to denote a general $r$-permutation instead of $\mathcal{P}^r_i$. Also, we shall use $\mathcal{P}_i[x]$, instead of $p_{ix}$, to denote the $x^{th}$ element of the $i^{th}$ permutation.

\begin{theorem}
 The algorithm generates only valid permutations of $\mathcal{L}$.
\end{theorem}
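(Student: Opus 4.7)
The plan is to fix what ``valid'' means precisely and then verify it by induction on the recursion depth. A printed output $\mathcal{R}[1\ldots r]$ is a valid $r$-permutation of $\mathcal{L}$ exactly when (i) $\mathcal{R}$ has length $r$, (ii) every entry $\mathcal{R}[j]$ is an element of $\mathcal{L}$, and (iii) for every unique element $o_i\in\mathcal{O}$, the number of times $o_i$ appears in $\mathcal{R}$ does not exceed the number of times it appears in $\mathcal{L}$ (i.e.\ the initial value of $CountArray[i]$, which I will denote $C_0[i]$). Condition (i) is immediate from the base case \textbf{if} $index>r$ in Procedure~\ref{permute}, and condition (ii) follows from the fact that line~9 only ever assigns values drawn from $\mathcal{O}$, whose entries are by definition elements of $\mathcal{L}$. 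So the real content is condition (iii).

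For (iii) I would first state and prove the following invariant: at the moment control enters \textbf{APR}$(index)$, we have
\[
CountArray[i] \;=\; C_0[i] \;-\; \#\{\,j : 1\le j<index,\ \mathcal{R}[j]=o_i\,\} \qquad \text{for each } 1\le i\le p,
\]
and in particular every entry of $CountArray$ is nonnegative. I would prove this by induction on $index$. The base case $index=1$ is exactly how $CountArray$ is initialised by Procedure~\ref{countarray}, since $\mathcal{R}[1\ldots 0]$ is empty. For the inductive step, I would inspect the block of lines~9--12 in Procedure~\ref{permute}: line~9 sets $\mathcal{R}[index]\gets o_i$, line~10 decrements $CountArray[i]$, and these two updates together preserve the invariant when control descends into \textbf{APR}$(index+1)$. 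The symmetric increment on line~12, performed after the recursive call returns, restores $CountArray$ to exactly its state on entry to \textbf{APR}$(index)$, so that the invariant holds again for the next iteration of the \textbf{for} loop at depth $index$.

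Once the invariant is in hand, condition (iii) is almost immediate. When the algorithm prints $\mathcal{R}[1\ldots r]$ at the base case it has just returned from a chain of calls that made assignments $\mathcal{R}[j]\gets o_{i_j}$, each of which was guarded by the test $CountArray[i_j]\ge 1$ on line~8. Applying the invariant at each such call, the guard is exactly the statement that after placing $\mathcal{R}[j]$ the running count of $o_{i_j}$ used in $\mathcal{R}[1\ldots j]$ is at most $C_0[i_j]$. Iterating over $j=1,\ldots,r$ gives condition (iii) for every $o_i\in\mathcal{O}$, completing the argument.

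The main obstacle I anticipate is the bookkeeping needed to state the invariant cleanly and to verify that the post-recursion increment on line~12 really does restore the pre-call state of $CountArray$. This is where a careless informal argument can slip: the recursive call may perform arbitrarily many intermediate modifications to $CountArray$, so I would explicitly invoke the inductive hypothesis to conclude that \textbf{APR}$(index+1)$ returns $CountArray$ in the same state in which it received it, and only then combine that with the paired decrement/increment of lines~10 and~12 to close the induction.
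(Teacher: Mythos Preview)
Your proposal is correct and follows essentially the same approach as the paper: the paper's proof simply observes that an element $o_i$ is assigned only when $CountArray[i]\ge 1$ and that $CountArray[i]$ is decremented/incremented in tandem with assignments/de-assignments, concluding that no element is used more times than it occurs in $\mathcal{L}$. Your version formalises exactly this argument by stating an explicit invariant on $CountArray$ and proving it by induction on the recursion depth; the paper leaves that invariant implicit and the proof informal, but the underlying idea is identical.
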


\begin{proof}
 We assign an element $o_i$ to any particular index in $\mathcal{R}$ only if we have $CountArray[i] \ge 1$. Also, we decrement or increment $CountArray[i]$ whenever $o_i$ is assigned or de-assigned, respectively. This process ensures that APR only assigns elements that were actually present in $\mathcal{L}$ and no element is assigned more times than it occurs. Hence we can say that every permutation that is output is valid.
\end{proof}

\begin{theorem}
\label{thm:unique_order}
The algorithm generates only unique permutations of $\mathcal{L}$ following order $\mathcal{O}$.
\end{theorem}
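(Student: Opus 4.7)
My plan is to reason about the recursion tree depicted in Section~\ref{subsec:rectree}. The key observation is that each printed permutation corresponds to a leaf of this tree, and the sequence $\mathcal{R}[1], \ldots, \mathcal{R}[r]$ is precisely the sequence of $o_i$ chosen along the root-to-leaf path (one element per recursion level). Moreover, the permutations are printed in the order of a left-to-right traversal of the leaves, and at any internal node at depth $d$ (just before the call \textbf{APR}$(d+1)$), the for-loop iterates $i$ from $1$ to $p$, so the children of that node are ordered by strictly increasing $i$. I intend to prove both uniqueness and the order property simultaneously by comparing two distinct output permutations through their lowest common ancestor (LCA) in this tree.

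Concretely, let $\mathcal{P}_a$ and $\mathcal{P}_b$ be two distinct permutations printed by the algorithm, with $a < b$, and let $v$ be the LCA of their leaves, sitting at some depth $d-1$. Since $v$ is the deepest common ancestor, $\mathcal{P}_a$ and $\mathcal{P}_b$ enter $v$ with the same partial assignment $\mathcal{R}[1], \ldots, \mathcal{R}[d-1]$, so they agree coordinate-wise on the first $d-1$ positions. At $v$ they take different children of the for loop, which correspond to different iteration indices $i_a \neq i_b$ of $i$. Because the $p$ elements of $\mathcal{O}$ are distinct (Order is a set of unique elements), we have $o_{i_a} \neq o_{i_b}$, so $\mathcal{P}_a[d] = o_{i_a}$ and $\mathcal{P}_b[d] = o_{i_b}$ differ. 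This gives uniqueness ($\mathcal{P}_a \neq \mathcal{P}_b$) and also pins down the discriminating index: $\mathcal{D}(\mathcal{P}_a,\mathcal{P}_b) = d$.

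For the order claim, I would use the fact that $a < b$ in the output means the leaf of $\mathcal{P}_a$ precedes that of $\mathcal{P}_b$ in the left-to-right leaf order of the recursion tree. Since the children of $v$ are produced in increasing order of $i$ by the for loop on lines~7--14 of Procedure~\ref{permute}, the earlier leaf must sit under the child with the smaller index, i.e.\ $i_a < i_b$. By the definition of $\mathcal{I}_\mathcal{O}$, this is exactly $\mathcal{I}_\mathcal{O}(\mathcal{P}_a[d]) = i_a < i_b = \mathcal{I}_\mathcal{O}(\mathcal{P}_b[d])$, which is the condition required for $\mathcal{P}^r(\mathcal{L})$ to follow $\mathcal{O}$ (Section~\ref{sec:order}).

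The main obstacle I anticipate is the bookkeeping that turns the intuitive ``LCA depth equals discriminating index and for-loop order equals leaf order'' story into a rigorous statement. To clean this up I would state and prove two small invariants by induction on the recursion depth: (i) at the entry to \textbf{APR}$(\mathit{index})$, the array $\mathcal{R}[1 \ldots \mathit{index}-1]$ stores precisely the labels of the edges from the root to the current node; and (ii) within a single invocation of \textbf{APR}, the subtrees rooted at its recursive calls are visited in strictly increasing order of the chosen $i$. Together with the earlier observation that distinct for-loop iterations at a node assign distinct elements to $\mathcal{R}[\mathit{index}]$, these invariants make the LCA argument above immediate and complete the proof.
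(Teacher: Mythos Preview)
Your argument is correct, and it is a genuinely different route from the paper's. The paper reasons \emph{operationally} about how APR passes from one output permutation to the next: after printing $\mathcal{P}_a$, the backtracking looks for the rightmost position where a higher $o_i$ can be placed (Property~1), places the smallest such $o_i$ there (Property~2), and then fills the tail minimally (Property~3). From this it concludes that each consecutive pair $(\mathcal{P}_a,\mathcal{P}_{a+1})$ differs at the discriminating index with a strictly larger $\mathcal{O}$-index on the right, and hence the sequence is strictly increasing in the $\mathcal{O}$-lexicographic sense; uniqueness and order follow by chaining these consecutive comparisons.

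By contrast, you argue \emph{structurally}: for an arbitrary pair $\mathcal{P}_a,\mathcal{P}_b$ with $a<b$, their lowest common ancestor in the recursion tree sits at depth $d-1$, forcing agreement on $\mathcal{R}[1,\dots,d-1]$ and disagreement at position $d$, where the for-loop's increasing $i$ immediately gives $\mathcal{I}_\mathcal{O}(\mathcal{P}_a[d])<\mathcal{I}_\mathcal{O}(\mathcal{P}_b[d])$. This directly verifies the definition in Section~\ref{sec:order} for \emph{all} pairs rather than only consecutive ones, so you avoid the implicit appeal to transitivity of the $\mathcal{O}$-order that the paper's proof relies on. The paper's decomposition has the advantage that Properties~1--3 are reused later (in Lemma~\ref{lemma:nextperm}) to show that no permutation is skipped; your LCA argument is self-contained for Theorem~\ref{thm:unique_order} but does not by itself supply those properties for the completeness proof.
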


Before we prove Theorem~\ref{thm:unique_order}, we must understand how the algorithm generates a permutation subsequent to an already generated one. Once a permutation is output, the program searches backwards from the end of $\mathcal{R}$ for an index where it can assign, from the available elements, an element that has a \emph{higher} position in $\mathcal{O}$ than the previously assigned element. This means that, if $o_x$ were assigned at a particular index, the algorithm checks if it can assign any among $o_{x+1} \ldots o_p$ at the same index. At the \emph{first instance} (referred to as Property 1) of such an index, the algorithm assigns the \emph{first among} (referred to as Property 2) $o_{x+1} \ldots o_p$, whichever is available. Note that this index would be the discriminating index between the previously output permutation and the next permutation that is going to be output. Once this is done, the program \emph{populates the rest of $\mathcal{R}$ minimally} (referred to as Property 3).

\begin{proof}[Proof of Theorem~\ref{thm:unique_order}]
This process ensures that, between two consecutive permutations, we will have at least one index where the elements differ. Given that at this index, i.e. discriminating index, we assign one of $o_{x+1} \ldots o_p$, we can be sure that the permutation generated subsequent to a particular permutation will occupy a higher position in $\mathcal{P}^r(\mathcal{L})$ (when it follows order $\mathcal{O}$). 
\end{proof}

\begin{lemma}
\label{lemma:disc}
 Given three consecutive permutations $\mathcal{P}_{i}$, $\mathcal{P}_j$ and $\mathcal{P}_{k}$, we have: \\ $\mathcal{D}(\mathcal{P}_{i},\mathcal{P}_j) \ge \mathcal{D}(\mathcal{P}_{i},\mathcal{P}_{k})$.
\end{lemma}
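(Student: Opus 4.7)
The plan is to reduce the lemma to the sharper identity
\[\mathcal{D}(\mathcal{P}_i,\mathcal{P}_k) \;=\; \min(d_1,d_2), \qquad d_1 := \mathcal{D}(\mathcal{P}_i,\mathcal{P}_j),\;\; d_2 := \mathcal{D}(\mathcal{P}_j,\mathcal{P}_k),\]
from which the lemma is immediate since $\min(d_1,d_2)\le d_1$. Uniqueness (Theorem~\ref{thm:unique_order}) guarantees $\mathcal{P}_i\neq\mathcal{P}_k$, so $\mathcal{D}(\mathcal{P}_i,\mathcal{P}_k)$ is well-defined.

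First I would dispatch the ``agreement'' direction. By the very definition of the discriminating index, $\mathcal{P}_i$ and $\mathcal{P}_j$ agree on positions $1,\ldots,d_1-1$, and $\mathcal{P}_j$ and $\mathcal{P}_k$ agree on positions $1,\ldots,d_2-1$. Hence all three permutations coincide on positions $1,\ldots,\min(d_1,d_2)-1$, which already yields $\mathcal{D}(\mathcal{P}_i,\mathcal{P}_k)\ge \min(d_1,d_2)$.

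To get equality (and thereby exhibit a concrete position witnessing the difference between $\mathcal{P}_i$ and $\mathcal{P}_k$), I would split into three cases. If $d_1<d_2$, then $\mathcal{P}_j[d_1]=\mathcal{P}_k[d_1]$ while $\mathcal{P}_i[d_1]\neq\mathcal{P}_j[d_1]$, so $\mathcal{P}_i[d_1]\neq\mathcal{P}_k[d_1]$ and $\mathcal{D}(\mathcal{P}_i,\mathcal{P}_k)=d_1$. The case $d_2<d_1$ is symmetric and gives $\mathcal{D}(\mathcal{P}_i,\mathcal{P}_k)=d_2$. In the remaining case $d_1=d_2=:d$, I would invoke the order-following property from Theorem~\ref{thm:unique_order} at position $d$: applied to the pair $(\mathcal{P}_i,\mathcal{P}_j)$ and the pair $(\mathcal{P}_j,\mathcal{P}_k)$ it yields
\[\mathcal{I}_{\mathcal{O}}\!\bigl(\mathcal{P}_i[d]\bigr) \;<\; \mathcal{I}_{\mathcal{O}}\!\bigl(\mathcal{P}_j[d]\bigr) \;<\; \mathcal{I}_{\mathcal{O}}\!\bigl(\mathcal{P}_k[d]\bigr),\]
so the three values at position $d$ are pairwise distinct; in particular $\mathcal{P}_i[d]\neq\mathcal{P}_k[d]$, hence $\mathcal{D}(\mathcal{P}_i,\mathcal{P}_k)=d$.

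The main obstacle I expect is this third case. The first two cases follow from elementary pointwise reasoning about discriminating indices, but when $d_1=d_2$ one must rule out the possibility that $\mathcal{P}_i$ and $\mathcal{P}_k$ happen to agree at their common candidate discriminating index. A purely mechanical argument about Properties~1--3 of the APR backward search does not suffice here; what closes the gap is the global ordering guarantee proved in Theorem~\ref{thm:unique_order}, namely that consecutive permutations strictly advance in $\mathcal{O}$ at their discriminating index, which chained twice forces strict advancement from $\mathcal{P}_i$ to $\mathcal{P}_k$ as well.
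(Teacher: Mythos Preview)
Your proof is correct but takes a different route from the paper. The paper argues by contradiction: assuming $d_{ij}<d_{ik}$, it observes that $\mathcal{P}_i$ and $\mathcal{P}_k$ then agree at position $d_{ij}$, so $\mathcal{I}_\mathcal{O}(\mathcal{P}_k[d_{ij}])=\mathcal{I}_\mathcal{O}(\mathcal{P}_i[d_{ij}])<\mathcal{I}_\mathcal{O}(\mathcal{P}_j[d_{ij}])$; since $\mathcal{P}_j$ and $\mathcal{P}_k$ also agree on positions $1,\ldots,d_{ij}-1$, this forces $\mathcal{P}_k$ to precede $\mathcal{P}_j$ in the order $\mathcal{O}$, a contradiction. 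No mention of $d_{jk}$ is made.

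You instead establish the sharper identity $\mathcal{D}(\mathcal{P}_i,\mathcal{P}_k)=\min(d_1,d_2)$ directly via a three-way case split on the relation between $d_1$ and $d_2$, invoking the order-following property of Theorem~\ref{thm:unique_order} only in the $d_1=d_2$ case. This costs a bit more than the paper's single contradiction, but it buys more: your identity yields both Lemma~\ref{lemma:disc} and Corollary~\ref{corr:disc} (namely $d_{jk}\ge d_{ik}$) simultaneously, whereas the paper proves the corollary by a separate argument. Your approach also makes the ultrametric-like behaviour of $\mathcal{D}$ on an $\mathcal{O}$-ordered list of permutations explicit, which is a cleaner structural picture than the paper's bare inequality.
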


\begin{proof}
 This shall be proved by contradiction. Let us assume we have $\mathcal{P}_{i}$, $\mathcal{P}_j$, $\mathcal{P}_{j}$ such that
\begin{align}
d_{ij} &< d_{ik} \label{eqn:contrad1}
\end{align}
where $d_{ij}, d_{jk}, d_{ik}$ represent each of $\mathcal{D}(\mathcal{P}_{i},\mathcal{P}_j)$, $\mathcal{D}(\mathcal{P}_{j},\mathcal{P}_{k})$, $\mathcal{D}(\mathcal{P}_{i},\mathcal{P}_{k})$, respectively. By the definition of discriminating index in Section~\ref{sec:order}, we have the following equations.
\begin{align}
\mathcal{P}_{i}[1 \ldots (d_{ij} - 1)] &= \mathcal{P}_{j}[1 \ldots (d_{ij} - 1)] \label{eqn:eqp1} \\
\mathcal{I}_{\mathcal{O}}(\mathcal{P}_{i}[d_{ij}]) &< \mathcal{I}_{\mathcal{O}}(\mathcal{P}_j[d_{ij}]) \label{eqn:disc1}
\end{align}
and
\begin{align}
\mathcal{P}_{i}[1 \ldots (d_{ik} - 1)] &= \mathcal{P}_{k}[1 \ldots (d_{ik} - 1)] \label{eqn:eqp2} \\
\mathcal{I}_{\mathcal{O}}(\mathcal{P}_{i}[d_{ik}]) &< \mathcal{I}_{\mathcal{O}}(\mathcal{P}_{k}[d_{ik}])
\end{align}
From equations \ref{eqn:contrad1} and \ref{eqn:eqp2},  we can say $\mathcal{P}_{k}[d_{ij}] = \mathcal{P}_{i}[d_{ij}]$. By using this in equation~\ref{eqn:disc1}, we get
\begin{align}
\mathcal{I}_{\mathcal{O}}(\mathcal{P}_{k}[d_{ij}]) &< \mathcal{I}_{\mathcal{O}}(\mathcal{P}_j[d_{ij}]) \label{eqn:finp1}
\end{align}
Also, from equations \ref{eqn:eqp1} and \ref{eqn:eqp2} and equation \ref{eqn:contrad1}, we get
\begin{align}
\mathcal{P}_{k}[1 \ldots (d_{ij} - 1)] &= \mathcal{P}_{j}[1 \ldots (d_{ij} - 1)]  \label{eqn:finp2}
\end{align}
Equations \ref{eqn:finp1} and \ref{eqn:finp2} imply that $\mathcal{P}_{k}$ comes before $\mathcal{P}_{j}$, when order of consideration is $\mathcal{O}$, which is a contradiction to the order of the permutations. This in turn implies that the initial assumption is wrong. 
\end{proof}

\begin{corollary}
\label{corr:disc}
$\mathcal{D}(\mathcal{P}_{j},\mathcal{P}_k) \ge \mathcal{D}(\mathcal{P}_{i},\mathcal{P}_{k})$.
\end{corollary}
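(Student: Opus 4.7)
My plan is to derive the corollary as an immediate consequence of Lemma~\ref{lemma:disc}, without any new contradiction argument. The key observation is that discriminating indices essentially record the length of the common prefix of two permutations, and common prefixes behave transitively: if $\mathcal{P}_i$ and $\mathcal{P}_j$ share a prefix of length $L$, and $\mathcal{P}_i$ and $\mathcal{P}_k$ share a prefix of length $L$, then so do $\mathcal{P}_j$ and $\mathcal{P}_k$.

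Concretely, I would proceed as follows. First, invoke Lemma~\ref{lemma:disc} on the triple $\mathcal{P}_i, \mathcal{P}_j, \mathcal{P}_k$ to obtain $d_{ij} \ge d_{ik}$. From the definition of the discriminating index, this gives $\mathcal{P}_i[x] = \mathcal{P}_j[x]$ for every $1 \le x \le d_{ij}-1$, and hence in particular for every $1 \le x \le d_{ik}-1$. Similarly, from the definition of $d_{ik}$ itself, $\mathcal{P}_i[x] = \mathcal{P}_k[x]$ for every $1 \le x \le d_{ik}-1$. Chaining these two equalities yields $\mathcal{P}_j[x] = \mathcal{P}_k[x]$ for every $1 \le x \le d_{ik}-1$, and therefore $d_{jk} \ge d_{ik}$, which is the desired inequality. (If $d_{ik}=1$ the intermediate prefix range is empty and the conclusion $d_{jk}\ge 1$ is immediate.)

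Alternatively, one could mimic the contradiction-style proof of Lemma~\ref{lemma:disc}: assume $d_{jk} < d_{ik}$, note that then $\mathcal{P}_i$ and $\mathcal{P}_k$ must still agree at position $d_{jk}$ because $d_{jk} < d_{ik}$, so $\mathcal{P}_i[d_{jk}] = \mathcal{P}_k[d_{jk}] \ne \mathcal{P}_j[d_{jk}]$. This forces $d_{ij} \le d_{jk} < d_{ik}$, contradicting Lemma~\ref{lemma:disc}. Either route works, and I would favour the direct one for brevity.

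There is no real obstacle here; the corollary is genuinely a one-line consequence of the lemma. The only minor point worth stating cleanly is that ``$d_{ij} \ge d_{ik}$'' is really a statement about nested common prefixes, so that transitivity of equality on the shorter prefix $[1, d_{ik}-1]$ is what delivers the result.
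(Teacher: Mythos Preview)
Your direct argument is correct and is essentially identical to the paper's own proof: the paper also invokes Lemma~\ref{lemma:disc} to obtain $d_{ij} \ge d_{ik}$, then uses the prefix equalities from the definition of the discriminating index to conclude $\mathcal{P}_{j}[1 \ldots (d_{ik}-1)] = \mathcal{P}_{k}[1 \ldots (d_{ik}-1)]$, whence $d_{jk} \ge d_{ik}$.
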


\begin{proof}
 Equations \ref{eqn:eqp1} and \ref{eqn:eqp2} follow from the definition of discriminating index. From Lemma~\ref{lemma:disc} we have $d_{ij} \ge d_{ik}$, thus we get
\begin{align}
\mathcal{P}_{k}[1 \ldots (d_{ik} - 1)] &= \mathcal{P}_{j}[1 \ldots (d_{ik} - 1)] 
\end{align}
Thus the corollary follows. 
\end{proof}

\begin{lemma}
\label{lemma:nextperm}
 Given two successive permutations that APR generates, say $\mathcal{P}_a$ and $\mathcal{P}_b$, $\nexists\, \mathcal{P}_x \in \left(\mathcal{P}^r(\mathcal{L})/ \{\mathcal{P}_a,\mathcal{P}_b\}\right)$: such that $\mathcal{P}_x$ occurs between $\mathcal{P}_a$ and $\mathcal{P}_b$.\footnote{$\mathcal{A}/\{A_1,A_2\}$ Denotes $\mathcal{A}$ excluding $A_1$ and $A_2$}
\end{lemma}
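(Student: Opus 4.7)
The plan is to prove Lemma~\ref{lemma:nextperm} by contradiction, exploiting the three properties enumerated just before the proof of Theorem~\ref{thm:unique_order}: APR picks the \emph{largest} index at which the previously assigned element can be replaced (Property 1), the \emph{smallest} available higher element at that index (Property 2), and then \emph{minimally} populates the tail (Property 3). Suppose some $\mathcal{P}_x$ strictly between $\mathcal{P}_a$ and $\mathcal{P}_b$ in the order exists, and set $d_{ab} = \mathcal{D}(\mathcal{P}_a, \mathcal{P}_b)$, $d_{ax} = \mathcal{D}(\mathcal{P}_a, \mathcal{P}_x)$, $d_{xb} = \mathcal{D}(\mathcal{P}_x, \mathcal{P}_b)$. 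Applying Lemma~\ref{lemma:disc} to the ordered triple $(\mathcal{P}_a, \mathcal{P}_x, \mathcal{P}_b)$ together with Corollary~\ref{corr:disc} yields $d_{ax} \ge d_{ab}$ and $d_{xb} \ge d_{ab}$, which anchors the discussion at position $d_{ab}$ and beyond.

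First I would handle the case $d_{ax} > d_{ab}$. Here $\mathcal{P}_x[1 \ldots d_{ab}] = \mathcal{P}_a[1 \ldots d_{ab}]$ but $\mathcal{P}_x$ diverges later at position $d_{ax}$ with $\mathcal{I}_{\mathcal{O}}(\mathcal{P}_x[d_{ax}]) > \mathcal{I}_{\mathcal{O}}(\mathcal{P}_a[d_{ax}])$. Because $\mathcal{P}_x$ is a valid permutation extending the common prefix $\mathcal{P}_a[1 \ldots d_{ax}-1]$, the element $\mathcal{P}_x[d_{ax}]$ is present in $CountArray$ at the moment APR considers position $d_{ax}$, so APR's backward scan would halt at $d_{ax}$ rather than at $d_{ab}$, contradicting Property 1. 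In the remaining case $d_{ax} = d_{ab}$, I further split on $d_{xb}$. When $d_{xb} = d_{ab}$, the element $\mathcal{P}_x[d_{ab}]$ is an available element whose $\mathcal{O}$-position is strictly between those of $\mathcal{P}_a[d_{ab}]$ and $\mathcal{P}_b[d_{ab}]$, contradicting Property 2. When $d_{xb} > d_{ab}$, the permutations $\mathcal{P}_x$ and $\mathcal{P}_b$ share the prefix $\mathcal{P}_b[1 \ldots d_{ab}]$ but diverge at $d_{xb}$ with $\mathcal{I}_{\mathcal{O}}(\mathcal{P}_x[d_{xb}]) < \mathcal{I}_{\mathcal{O}}(\mathcal{P}_b[d_{xb}])$, contradicting Property 3 since the tail of $\mathcal{P}_b$ was populated minimally.

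The main obstacle will be the bookkeeping that certifies, in each sub-case, that the critical element of $\mathcal{P}_x$ really is available to APR at the relevant recursion level; this rests on the prefix equalities supplied by Lemma~\ref{lemma:disc} and Corollary~\ref{corr:disc} combined with the fact that $CountArray$ correctly tracks multiplicities. I expect the last sub-case (the Property 3 contradiction) to be the most delicate, because \emph{minimal tail} is a statement about every position from $d_{ab}+1$ to $r$, so one may need a small inductive remark that the first position at which $\mathcal{P}_x$ deviates downward from a minimally populated $\mathcal{P}_b$ immediately violates the minimality of assignment at that position.
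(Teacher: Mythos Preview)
Your proposal is correct and follows essentially the same route as the paper: both arguments assume a $\mathcal{P}_x$ strictly between $\mathcal{P}_a$ and $\mathcal{P}_b$, use Lemma~\ref{lemma:disc} and Corollary~\ref{corr:disc} to get $d_{ax},d_{xb}\ge d_{ab}$, and use Properties~1--3 to dispose of every remaining possibility. The only cosmetic difference is that the paper first combines Properties~1 and~3 with the lemma/corollary to force $d_{ax}=d_{xb}=d_{ab}$ in one stroke and then invokes Property~2, whereas you perform an explicit case split and apply the relevant property in each branch; the content is identical.
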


\begin{proof}
This shall be proved by contradiction. Assume there exists such a $\mathcal{P}_x$. Let us take $d_{ab} = \mathcal{D}(\mathcal{P}_a,\mathcal{P}_b)$, $d_{ax} = \mathcal{D}(\mathcal{P}_a,\mathcal{P}_x)$ and $d_{xb} = \mathcal{D}(\mathcal{P}_x,\mathcal{P}_b)$. By the analysis of how APR generates a permutation subsequent to an already generated one
\begin{align}
 d_{ax} \ngtr d_{ab} \qquad &\text{[From Property 1]} \\
 d_{xb} \ngtr d_{ab} \qquad &\text{[From Property 3]}
\end{align}
 Also we have $d_{ax} \nless d_{ab}$ from Lemma~\ref{lemma:disc} and $d_{xb} \nless d_{ab}$ from Corollary~\ref{corr:disc}. Thus $d_{ax} = d_{bx} = d_{ab} (= \lambda\text{ say})$ is the only possibility.

For $\mathcal{P}_x$ to be between $\mathcal{P}_a$ and $\mathcal{P}_b$: $\mathcal{I}_{\mathcal{O}}(\mathcal{P}_a[\lambda]) < \mathcal{I}_{\mathcal{O}}(\mathcal{P}_x[\lambda]) < \mathcal{I}_{\mathcal{O}}(\mathcal{P}_b[\lambda])$. This is a direct contradiction to Property 2. Thus we can have no $\mathcal{P}_x$ in between $\mathcal{P}_a$ and $\mathcal{P}_b$. 
\end{proof}

\begin{theorem}
 The algorithm generates all $r$-permutations of $\mathcal{L}$.
\end{theorem}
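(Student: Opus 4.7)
The plan is to prove completeness by induction on the position in the $\mathcal{O}$-ordered enumeration of $\mathcal{P}^r(\mathcal{L})$, leveraging the three results already established: validity (first theorem), uniqueness with order (Theorem~\ref{thm:unique_order}), and the ``no gap'' property (Lemma~\ref{lemma:nextperm}). Let $\mathcal{P}_1, \mathcal{P}_2, \ldots, \mathcal{P}_m$ denote the sequence of permutations output by APR, and let $\mathcal{Q}_1, \mathcal{Q}_2, \ldots, \mathcal{Q}_M$ denote the enumeration of \emph{all} valid $r$-permutations of $\mathcal{L}$ in the $\mathcal{O}$-order. The goal is to show $m = M$ and $\mathcal{P}_i = \mathcal{Q}_i$ for every $i$.

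For the base case, I would argue that $\mathcal{P}_1 = \mathcal{Q}_1$. The initial call APR$(1)$ acts on a fresh $CountArray$ and, following the inner \textbf{for} loop from $i=1$ upward, it populates $\mathcal{R}[1\ldots r]$ minimally (in the sense of the ``minimal assignment'' defined at the start of Section~\ref{sec:anal}). A short subsidiary argument by induction on the index $j$ shows that any valid $r$-permutation differing from $\mathcal{P}_1$ at some first index $j$ must use at that position an element $o_x$ with a strictly larger $\mathcal{O}$-index than $\mathcal{P}_1[j]$, since $\mathcal{P}_1[j]$ was the smallest index with a positive count given the chosen prefix. Hence $\mathcal{P}_1$ is the $\mathcal{O}$-minimum, so $\mathcal{P}_1 = \mathcal{Q}_1$.

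For the inductive step, assume $\mathcal{P}_i = \mathcal{Q}_i$ for all $i \le k$ and that $k < m$. By Theorem~\ref{thm:unique_order}, $\mathcal{P}_{k+1}$ is a valid, unique $r$-permutation of $\mathcal{L}$ lying strictly after $\mathcal{P}_k$ in the $\mathcal{O}$-order. If $\mathcal{P}_{k+1} \ne \mathcal{Q}_{k+1}$, then $\mathcal{Q}_{k+1}$ would be a valid $r$-permutation of $\mathcal{L}$ lying strictly between $\mathcal{P}_k$ and $\mathcal{P}_{k+1}$, contradicting Lemma~\ref{lemma:nextperm}. Thus $\mathcal{P}_{k+1} = \mathcal{Q}_{k+1}$, and induction proceeds.

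The main obstacle is the termination step: showing that $m = M$, i.e. that APR does not halt prematurely. Here I would argue that when the top-level call APR$(1)$ returns, the last output $\mathcal{P}_m$ must be the $\mathcal{O}$-maximum $\mathcal{Q}_M$. Concretely, using the description of how APR produces the successor of a permutation (Properties 1--3 recalled before the proof of Theorem~\ref{thm:unique_order}), the algorithm can fail to produce a further permutation only when there is no index of $\mathcal{R}$ at which the currently assigned element can be replaced by an available element with a higher $\mathcal{O}$-position; a direct inductive check on $j$ from $r$ down to $1$ shows that this condition characterizes the $\mathcal{O}$-maximum. Consequently $\mathcal{P}_m = \mathcal{Q}_M$, and together with the inductive step this gives $m = M$ and the full enumeration, completing the proof.
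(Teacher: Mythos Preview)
Your proposal is correct and follows essentially the same three-step structure as the paper's proof: establish that the first output is the $\mathcal{O}$-minimal permutation via minimal assignment, invoke Lemma~\ref{lemma:nextperm} to argue each successive output is the immediate next permutation in $\mathcal{O}$-order, and then handle termination. Your treatment of termination is in fact more careful than the paper's---you explicitly argue that the last output $\mathcal{P}_m$ must be the $\mathcal{O}$-maximum (ruling out premature halting), whereas the paper only argues that the recursion eventually halts because the loop range and recursion depth are finite.
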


\begin{proof}
 We begin by proving that APR correctly generates the first permutation. By Lemma~\ref{lemma:nextperm}, we have proved that the process of generating the next permutation given the current one is accurate in that the immediate next permutation, in accordance with the order $\mathcal{O}$, is generated. We then prove that APR terminates once all permutations are generated.

At the beginning of the algorithm, we perform a minimal assignment at the first index, i.e. $index=1$. The program then moves down a recursion level and once again performs a minimal assignment. This process continues until the last recursion level. In the $(r+1)^{th}$ recursion level, the program outputs the permutation. Clearly the first permutation that would be output by APR would be formed by populating $\mathcal{R}[1 \ldots r]$ minimally.

The program control returns from a level of recursion only after looping in $[1, p]$ is complete, i.e. when there are no more available elements that can be allotted at that corresponding index in $\mathcal{R}$. We know that the range $[1, p]$ is finite because the size of $\mathcal{L}$ is finite. Also the maximum recursion depth of the program is finite because $r$ is finite. This would ensure that every recursion level would eventually end which in turn ensures that the program will eventually terminate. 
\end{proof}

\subsection{Running Times}
\label{sec:anal:subsec:runtime}

To establish an upper bound on the running time of APR, let us analyse the recursion tree shown in Figure~\ref{rectree}. We see that the number of leaves are exactly equal to the number of unique $r$-permutations of $\mathcal{L}$, i.e. $|\mathcal{P}^r(\mathcal{L})|$. Each of the leaves are exactly $(r+1)$ levels below the root. The critical operations, i.e. looping and searching for available elements, are done on the first $r$ levels.\footnote{In the $(r+1)^{th}$ level, the permutations are output.} At each level, we loop in the range $[1, p]$. Thus we can say that the running time of APR is bounded by $O\left(|\mathcal{P}^r(\mathcal{L})| \times r \times p\right)$ in the worst case.

It is to be noted that the runtime of Procedure~\ref{countarray} has been ignored in comparison to the runtime of Procedure~\ref{permute}.

\section{Extensions}
\label{sec:exten}

In this section we show how the smallest of changes to Procedure~\ref{permute} helps us generate many more combinatorial structures. Please note that the proof of correctness of each of the following algorithms follows trivially from the rigorous analysis in Section~\ref{subsec:proof}.

\subsection{Derangements}

A derangement of a list is a permutation of the list in which none of the elements appear in their original positions. Some previous work on derangement generating algorithms can be found in \cite{deran2, deran1}.

With just two changes to Procedure~\ref{permute}, we are able to generate derangements - 
\begin{itemize}
 \item Non-local existence of $\mathcal{L}$, in addition to the other entities, is required
 \item Before assigning an element at a particular index, we perform an additional check to ensure that the same element does not occur at the exact same index in $\mathcal{L}$, i.e., if we were to assign $o_i$ to $\mathcal{R}[index]$, we would need to ensure that $l_{index} \neq o_i$
\end{itemize}

The procedure shall be referred to as APR2 (shown in Procedure~\ref{proc:apr2}). It is invoked by the call APR2(1). All we do is perform an additional check to ensure that the element being assigned at the current position does not appear in $\mathcal{L}$ at the same position (Line~\ref{proc:apr2:line:cond}). $CountArray$ is built exactly as was illustrated in Procedure~\ref{countarray}.

Extending the algorithm to generate partial derangements, like in \cite{partialderan}, is easily done. Figure~\ref{fig:rectreederan} shows the recursion tree when we generate derangements of $\mathcal{L} = \{1,2,3\}$; $r=3$ and $\mathcal{O}=\{1,2,3\}$. It is to be noted that the program can output $r$-derangements, as well all derangements following an imposed order.

\begin{algorithm}
\begin{algorithmic}[1]
\caption{: APR2($index$) - Derangements}
\label{proc:apr2}
\STATE \textbf{Local:} $index$
\STATE \textbf{Global:} $CountArray$, $\mathcal{O}$, $r$, $p$ and $\mathcal{R}$
\medskip
\IF{$index > r$}
	\STATE Print $\mathcal{R}[1, \ldots, r]$
	\RETURN
\ELSE
	\FOR{$i \gets 1$ to $p$}
		\IF{$CountArray[i] \ge 1$ \textbf{and} $l_{index} \neq o_i$} \label{proc:apr2:line:cond}
			\STATE $\mathcal{R}[index] \gets o_i$ 
			\STATE $CountArray[i] \gets CountArray[i] - 1$
			\STATE APR2($index + 1$)
			\STATE $CountArray[i] \gets CountArray[i] + 1$
		\ENDIF
	\ENDFOR
\ENDIF
\end{algorithmic}
\end{algorithm}

\begin{figure}[!htp]
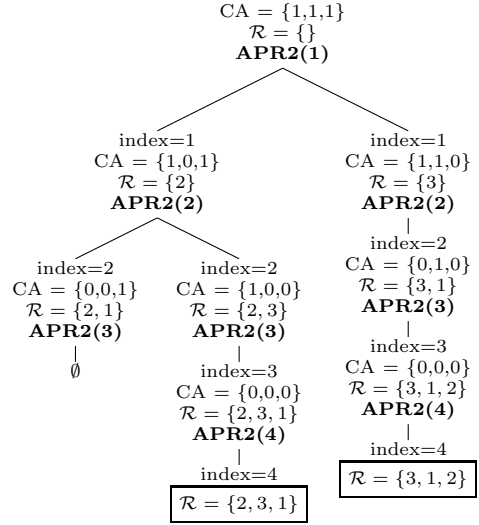
 \centering
{\scriptsize \Tree [.{CA = \{1,1,1\} \\ $\mathcal{R}=\{\}$ \\ \textbf{APR2(1)}} 
	[.{index=1 \\ CA = \{1,0,1\} \\ $\mathcal{R}=\{2\}$ \\ \textbf{APR2(2)}} 
		[.{index=2 \\ CA = \{0,0,1\} \\ $\mathcal{R}=\{2,1\}$ \\ \textbf{APR2(3)}}
			[.{$\emptyset$} ] ]
		[.{index=2 \\ CA = \{1,0,0\} \\ $\mathcal{R}=\{2,3\}$ \\ \textbf{APR2(3)}}
			[.{index=3 \\ CA = \{0,0,0\} \\ $\mathcal{R}=\{2,3,1\}$ \\ \textbf{APR2(4)}}
				 [.{index=4 \\ \fbox{$\mathcal{R}=\{2,3,1\}$}} ] ] ] ]
	 [.{index=1 \\ CA = \{1,1,0\} \\ $\mathcal{R}=\{3\}$ \\ \textbf{APR2(2)}}
		 [.{index=2 \\ CA = \{0,1,0\} \\ $\mathcal{R}=\{3,1\}$ \\ \textbf{APR2(3)}}
			 [.{index=3 \\ CA = \{0,0,0\} \\ $\mathcal{R}=\{3,1,2\}$ \\ \textbf{APR2(4)}}
				 [.{index=4 \\ \fbox{$\mathcal{R}=\{3,1,2\}$}} ] ] ] ] ]}
\caption{Recursion Tree of APR2: $\mathcal{L} = \{1,2,3\}$; $r=3$ and $\mathcal{O}=\{1,2,3\}$}
\label{fig:rectreederan}
\end{figure}

\subsection{Combinations}
\label{subsec:comb}

\begin{figure*}[!htp]
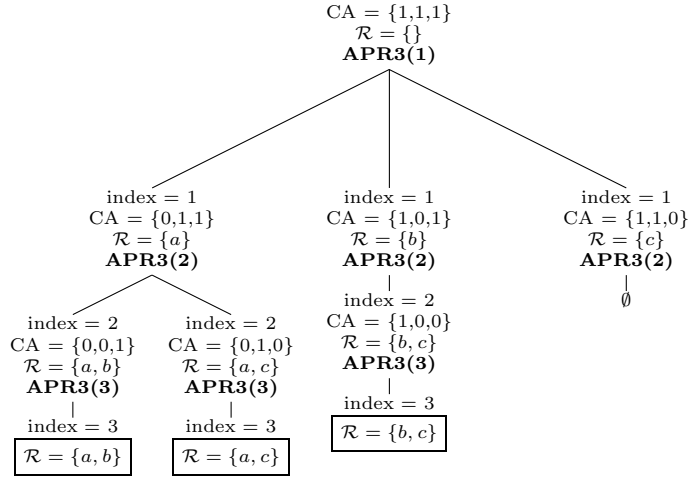
 \centering
{\scriptsize \Tree [.{CA = \{1,1,1\} \\ $\mathcal{R}=\{\}$ \\ \textbf{APR3(1)}} 
	[.{index = 1 \\ CA = \{0,1,1\} \\ $\mathcal{R}=\{a\}$ \\ \textbf{APR3(2)}} 
		[.{index = 2 \\ CA = \{0,0,1\} \\ $\mathcal{R}=\{a,b\}$ \\ \textbf{APR3(3)}}
			 [.{index = 3 \\ \fbox{$\mathcal{R}=\{a,b\}$}} ] ] 
		[.{index = 2 \\ CA = \{0,1,0\} \\ $\mathcal{R}=\{a,c\}$ \\ \textbf{APR3(3)}}
			[.{index = 3 \\ \fbox{$\mathcal{R}=\{a,c\}$}} ] ] ]
	[.{index = 1 \\ CA = \{1,0,1\} \\ $\mathcal{R}=\{b\}$ \\ \textbf{APR3(2)}} 
		[.{index = 2 \\ CA = \{1,0,0\} \\ $\mathcal{R}=\{b,c\}$ \\ \textbf{APR3(3)}}
			[.{index = 3 \\ \fbox{$\mathcal{R}=\{b,c\}$}} ] ] ]
	[.{index = 1 \\ CA = \{1,1,0\} \\ $\mathcal{R}=\{c\}$ \\ \textbf{APR3(2)}}
		 [.{$\emptyset$} ] ] ]}
\caption{Recursion Tree of APR3 for $\mathcal{L}=\{a,b,c\}$; $r = 2$}
\label{fig:rectreecomb}
\end{figure*}

A combination of a list of elements is a selection of some or all of the elements. It is a selection where the sequence of elements in the selection is not important. Some combination generating algorithms that have been proposed are Chase~\cite{chase_comb}, Ehrlich~\cite{gideon_perm, gideon_PC} and Lam-Soicher~\cite{clement_Cminchange}. 

We would require to make just one change to Procedure~\ref{permute} to be able to produce all possible $r$-combinations\footnote{Definition analogous to definition of $r$-permutation.} of an input list. The new procedure shall be referred to as APR3 (shown in Procedure~\ref{proc:apr3}). All we do is - before assigning an element at a particular index in $\mathcal{R}$, we make sure that the element occupies a position in $\mathcal{O}$ that is greater than the position of element at the previous index, in $\mathcal{O}$. At the first index however, we may assign any element. The change is in Line~\ref{proc:apr3:line:cond} of Procedure~\ref{proc:apr3}. $CountArray$ is built in exactly as was illustrated in Procedure~\ref{countarray}.

By changing, the condition $i \ge \mathcal{I}_{\mathcal{O}}(\mathcal{R}[index-1])$ to $i > \mathcal{I}_{\mathcal{O}}(\mathcal{R}[index-1])$, we can get combinations which contain no repeated elements. The recursion tree obtained when we generate all combinations of $\mathcal{L} = \{a,b,c\}$ with $r=2$ and $\mathcal{O}=\{a,b,c\}$ is shown in Figure~\ref{fig:rectreecomb}.

\begin{algorithm}
\begin{algorithmic}[1]
\caption{: APR3($index$) - Combinations}
\label{proc:apr3}
\STATE \textbf{Local:} $index$
\STATE \textbf{Global:} $CountArray$, $\mathcal{O}$, $r$, $p$ and $\mathcal{R}$
\medskip
\IF{$index > r$}
	\STATE Print $\mathcal{R}[1, \ldots, r]$ \label{proc:apr3:line:print}
	\RETURN
\ELSE
	\FOR{$i \gets 1$ to $p$} \label{proc:apr3:line:loop}
		\IF{$CountArray[i] \ge 1 \,\textbf{and}\,(index = 1$ \textbf{or} $i \ge \mathcal{I}_{\mathcal{O}}(\mathcal{R}[index-1]))$} \label{proc:apr3:line:cond}
			\STATE $\mathcal{R}[index] \gets o_i$ 
			\STATE $CountArray[i] \gets CountArray[i] - 1$
			\STATE APR3($index + 1$)
			\STATE $CountArray[i] \gets CountArray[i] + 1$
		\ENDIF
	\ENDFOR
\ENDIF
\end{algorithmic}
\end{algorithm}

\begin{figure*}[!htp]
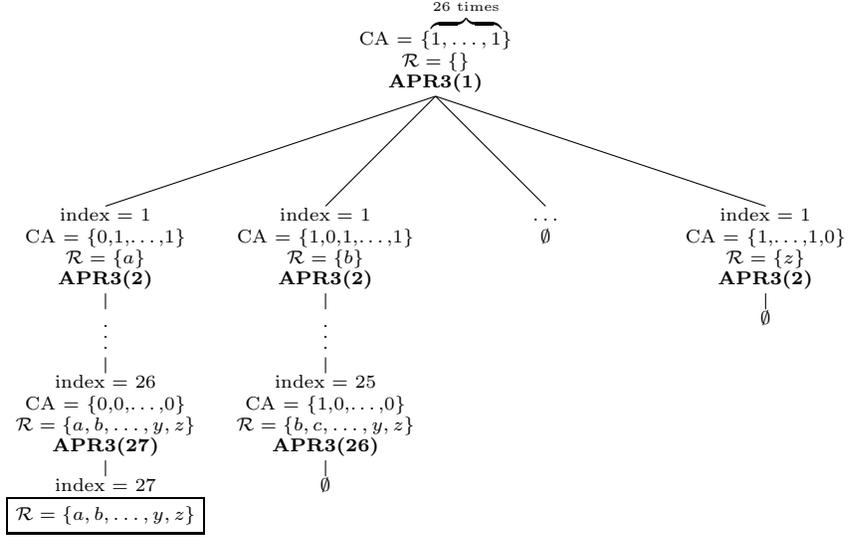
 \centering
{\scriptsize \Tree 
[
	.{CA = $\{\overbrace{1,\ldots,1}^{26 \text{ times}}\}$ \\ $\mathcal{R}=\{\}$ \\ \textbf{APR3(1)}} 
	[
		.{index = 1 \\ CA = \{0,1,\ldots,1\} \\ $\mathcal{R}=\{a\}$ \\ \textbf{APR3(2)}} 
		[
			.{\vdots}
			[
				.{index = 26 \\ CA = \{0,0,\ldots,0\} \\ $\mathcal{R}=\{a,b,\ldots,y,z\}$ \\ \textbf{APR3(27)}} 
				[.{index = 27 \\ \fbox{$\mathcal{R}=\{a,b,\ldots,y,z\}$}} ] 
			]
		]
	]
	[
		.{index = 1 \\ CA = \{1,0,1,\ldots,1\} \\ $\mathcal{R}=\{b\}$ \\ \textbf{APR3(2)}}
		[
			.{\vdots}
			[
				.{index = 25 \\ CA = \{1,0,\ldots,0\} \\ $\mathcal{R}=\{b,c,\ldots,y,z\}$ \\ \textbf{APR3(26)}} 
				[.{$\emptyset$} ] 
			]
		]
	]
	[
		.{$\ldots$ \\ $\emptyset$}
	]
	[
		.{index = 1 \\ CA = \{1,\ldots,1,0\} \\ $\mathcal{R}=\{z\}$ \\ \textbf{APR3(2)}}
		[
			.{$\emptyset$}
		]
	]
]}
\caption{Recursion Tree of APR3 for $\mathcal{L}=\{a,b,c,\ldots,z\}$; $r = 26$; $\mathcal{O}=\{a,b,\ldots,z\}$}
\label{fig:rectree_waste}
\end{figure*}

As is evident from Figure~\ref{fig:rectreecomb}, there can be many wasteful branches with certain kinds of input to APR3. For example, if we were to have $\mathcal{L} = \{a,b,\ldots,y,z\}$, $r=26$ and $\mathcal{O} = \{a,b,\ldots,y,z\}$, we would have a recursion tree similar to Figure~\ref{fig:rectree_waste}, where branches ending with $\emptyset$ denote wasteful branches, i.e. recursion branches that do not produce any output. All unnecessary computations are because we loop from $o_1 \ldots o_p$ at every index. However, based on two simple observations, we can completely eliminate all wasteful branches
\begin{enumerate}
 \item Once $o_x$ is assigned at a particular index, subsequent indices can only contain one of $o_{x+1} \ldots o_p$
 \item In the case where we have already assigned elements to the first $\beta$ indices of $\mathcal{R}$, and $\mathcal{R}[\beta] = o_y$, we need not loop if we are sure that there are fewer than $(r - \beta)$ elements left over, i.e., if $\sum_{i=y}^p CountArray[i] < (r - \beta)$, we can terminate looping in the current recursion level and return to the previous recursion level
\end{enumerate}

\begin{figure*}[!htp]
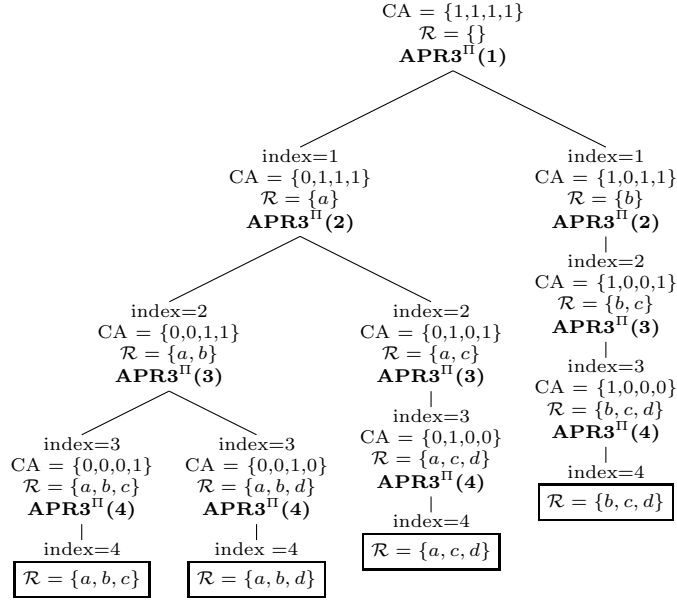
 \centering
{\scriptsize \Tree 
[.{CA = \{1,1,1,1\} \\ $\mathcal{R}=\{\}$ \\ \textbf{APR3$^{\Pi}$(1)}} 
	[.{index=1 \\ CA = \{0,1,1,1\} \\ $\mathcal{R}=\{a\}$ \\ \textbf{APR3$^{\Pi}$(2)}} 
		[.{index=2 \\ CA = \{0,0,1,1\} \\ $\mathcal{R}=\{a,b\}$ \\ \textbf{APR3$^{\Pi}$(3)}}
			[.{index=3 \\ CA = \{0,0,0,1\} \\ $\mathcal{R}=\{a,b,c\}$ \\ \textbf{APR3$^{\Pi}$(4)}}
				[.{index=4 \\ \fbox{$\mathcal{R}=\{a,b,c\}$}} ] ] 
			[.{index=3 \\ CA = \{0,0,1,0\} \\ $\mathcal{R}=\{a,b,d\}$ \\ \textbf{APR3$^{\Pi}$(4)}} 
				[.{index =4 \\ \fbox{$\mathcal{R}=\{a,b,d\}$}} ] ] ]
		[.{index=2 \\ CA = \{0,1,0,1\} \\ $\mathcal{R}=\{a,c\}$ \\ \textbf{APR3$^{\Pi}$(3)}}
			[.{index=3 \\ CA = \{0,1,0,0\} \\ $\mathcal{R}=\{a,c,d\}$ \\ \textbf{APR3$^{\Pi}$(4)}}
				[.{index=4 \\ \fbox{$\mathcal{R}=\{a,c,d\}$}} ] ] ] ]
	 [.{index=1 \\ CA = \{1,0,1,1\} \\ $\mathcal{R}=\{b\}$ \\ \textbf{APR3$^{\Pi}$(2)}}
		[.{index=2 \\ CA = \{1,0,0,1\} \\ $\mathcal{R}=\{b,c\}$ \\ \textbf{APR3$^{\Pi}$(3)}}
			[.{index=3 \\ CA = \{1,0,0,0\} \\ $\mathcal{R}=\{b,c,d\}$ \\ \textbf{APR3$^{\Pi}$(4)}}
				[.{index=4 \\ \fbox{$\mathcal{R}=\{b,c,d\}$}} ] ] ] ] ]}
\caption{Recursion Tree of APR3$^{\Pi}$ for $\mathcal{L}=\{a,b,c,d\}$; $r=3$; $\mathcal{O} = \{a,b,c,d\}$}
\label{fig:combCum}
\end{figure*}

\begin{algorithm}
\begin{algorithmic}[1]
\caption{: Building $CumCA$}
\label{proc:cumarray}
\STATE $CumCA[1] \gets 0$
\FOR{$i \gets 1$ to $p$}
	\STATE $CumCA[i+1] \gets CumCA[i] + CountArray[i]$
\ENDFOR
\end{algorithmic}
\end{algorithm}

\begin{algorithm}
\begin{algorithmic}[1]
\caption{: APR3$^{\Pi}$($index$) - Combinations \emph{Efficiently}}
\label{proc:apr3'}
\STATE \textbf{Local:} $index$
\STATE \textbf{Global:} $CumCA$, $CountArray$, $\mathcal{O}$, $r$, $p$ and $\mathcal{R}$
\medskip
\IF{$index > r$}
	\STATE Print $\mathcal{R}[1, \ldots, r]$
	\RETURN
\ELSE
	\STATE $i \gets \mathcal{I}_{\mathcal{O}}(\mathcal{R}[index-1])$
	\WHILE{$(CumCA[p+1] - CumCA[i+1] + CountArray[i]) > (r - index)$}
		\IF{$CountArray[i] \ge 1$}
			\STATE $\mathcal{R}[index] \gets o_i$ 
			\STATE $CountArray[i] \gets CountArray[i] - 1$
			\STATE APR3$^{\Pi}$($index + 1$)
			\STATE $CountArray[i] \gets CountArray[i] + 1$
		\ENDIF
		\STATE $i \gets i + 1$
	\ENDWHILE
\ENDIF
\end{algorithmic}
\end{algorithm}

This means that in Line~\ref{proc:apr3:line:loop} of Procedure~\ref{proc:apr3}, the loop limits would need to be made tighter. For simplicity, we modify the definition of $\mathcal{I}_{\mathcal{O}}(element)$ to return 0 if $element \notin \mathcal{O}$. Also, we assign to $\mathcal{R}[0]$ an element that does not exist in $\mathcal{O}$. This would mean that $\mathcal{I}_{\mathcal{O}}(\mathcal{R}[0]) = 0$. Now, Point 1 indicates that we can begin looping from $\mathcal{I}_{\mathcal{O}}(\mathcal{R}[index-1])$ instead of 1.

To incorporate Point 2, we build a \emph{Cumulative Array}~\cite{pearls} on $CountArray$. It is called $CumCA$ and is of size $(p+1)$. It is built such that $(CumCA[y+1] - CumCA[x]) = \sum_{i=x}^y CountArray[i]$. A quick implementation is shown Procedure~\ref{proc:cumarray}.

It is to be noted that $CumCA$ reflects $CountArray$ values at the beginning. We know that $CountArray$ values are constantly changing. Thus we can use $CumCA$ only for indices where $CountArray$ values are known to have not changed. Procedure~\ref{proc:apr3'} is a pseudocode representation of the modified version of APR3. This is denoted by APR3$^{\Pi}$. The effectiveness of APR3$^{\Pi}$ is made apparent in Figure~\ref{fig:combCum}. Analogous to the argument in Section~\ref{sec:anal:subsec:runtime}, the running time of APR3$^{\Pi}$ is in $O((n-r+1) \times r \times |\mathcal{C}^{r}(\mathcal{L})|)$

\subsection{Catalan Families}

In this section, we show how the algorithm can be modified to output all possible valid parenthesizations of $(n+1)$ factors. A valid parenthesization of $(n+1)$ factors can be defined as a sequence of $n$ opening brackets and $n$ closing brackets with the condition that at no point in the sequence should the number of closing brackets be greater than the number of opening brackets. We shall refer to the set of all possible valid parenthesizations of $(n+1)$ factors as $n$-Parenthesizations. Some previous work can be found in \cite{catalan2, knuth_perm}.

\begin{algorithm}
\caption{: Building $CountArray$}
\label{countCat}
\begin{algorithmic}[1]
\STATE \textbf{Input: $n$}
\STATE \textbf{Output: $CountArray$}
\medskip
 \STATE $CountArray[1] \gets n$
 \STATE $CountArray[2] \gets n$
\end{algorithmic}
\end{algorithm}

\begin{algorithm}
\begin{algorithmic}[1]
\caption{: APR4($index$) - $n$-Parenthesizations}
\label{permCat}
\STATE \textbf{Local:} $index$
\STATE \textbf{Global:} $CountArray$, $n$, $\mathcal{O}$ and $\mathcal{R}$
\medskip

\IF{$index > 2n$}
	\STATE Print $\mathcal{R}[1, \ldots, 2n]$
	\RETURN
\ELSE
\FOR{$i=1 \textbf{ to } 2$}
	\IF{($i \neq 2$ \textbf{ or } $CountArray[2]$ $>$ $CountArray[1]$) \textbf{and} $CountArray[i] \ge 1$}
		\STATE $\mathcal{R}[index] \gets o_i$ 
		\STATE $CountArray[i] \gets CountArray[i] - 1$
		\STATE APR4($index + 1$)
		\STATE $CountArray[i] \gets CountArray[i] + 1$
	\ENDIF
\ENDFOR
\ENDIF
\end{algorithmic}
\end{algorithm}

The input to the algorithm is $n$. We initialize $CountArray$ such that $CountArray[1]$ = $CountArray[2]$ = $n$. This can be viewed as an input list of $n$ opening brackets and $n$ closing brackets ($p = 2$). We then generate all possible permutations of this input list using $CountArray$ and make sure that at no point in a permutation we assign more closing brackets than opening brackets. We set $o_1 = "("$ and $o_2 = ")"$.

Once again, we see that we generate a different combinatorial structure using pretty much the same algorithm. The building of $CountArray$ is shown in Procedure~\ref{countCat} and the recursive generation of parenthesizations is shown in Procedure~\ref{permCat}. The recursive procedure is called APR4.

\subsection{Subsets}

\begin{algorithm}
\begin{algorithmic}[1]
\caption{: APR5($index$) - Subsets}
\label{proc:apr5}
\STATE \textbf{Local:} $index$
\STATE \textbf{Global:} $CountArray$, $\mathcal{O}$, $n$, $p$ and $\mathcal{R}$
\medskip
\STATE Print $\mathcal{R}[1, \ldots, index]$ \label{proc:apr5:line:print}
\IF{$index > n$}
	\RETURN
\ELSE
	\FOR{$i \gets 1$ to $p$}
		\IF{$CountArray[i] \ge 1 \,\textbf{and}\,(index = 1$ \textbf{or} $i > \mathcal{I}_{\mathcal{O}}(\mathcal{R}[index-1]))$} \label{proc:apr5:line:cond}
			\STATE $\mathcal{R}[index] \gets o_i$ 
			\STATE $CountArray[i] \gets CountArray[i] - 1$
			\STATE APR5($index + 1$)
			\STATE $CountArray[i] \gets CountArray[i] + 1$
		\ENDIF
	\ENDFOR
\ENDIF
\end{algorithmic}
\end{algorithm}

True to the formula $2^n = \sum_{i=0}^n \binom{n}{i}$, we know that all subsets of a list can be generated using an algorithm that generates all $r$-combinations. This is achieved by calling the $r$-combinations generating algorithm successively with: $0 \le r \le n$. However, the structure of APR allows us to do it more efficiently.

All subsets of a list of $n$ elements can be obtained by calling APR5 (shown in Procedure~\ref{proc:apr5}). It is a modified version of APR3 (shown in Procedure~\ref{proc:apr3}) with $r=n$ and the \textbf{Print} statement before the \textbf{if} block. $CountArray$ is again built exactly as was shown in Procedure~\ref{countarray}.

\subsection{Integer Compositions and Partitions}

A composition of an integer $n$ is a set of strictly positive integers which sum up to $n$~\cite{compositions}. For example, 3 has four compositions - $\big\{\{1,1,1\},\{1,2\},\{2,1\},\{3\}\big\}$. Generally, a composition of $n$ can contain any number from 1 to $n$. However, it is also interesting to study a variation of the problem wherein we are not allowed to use all numbers.

\begin{algorithm}
\begin{algorithmic}[1]
\caption{: APR6($index$) - Integer Compositions}
\label{proc:compo}
\STATE \textbf{Local:} $index$
\STATE \textbf{Global:} $n$, $\mathcal{O}$, $p$ and $\mathcal{R}$
\medskip
\IF{$n = 0$}
	\STATE Print $\mathcal{R}[1, \ldots, (index-1)]$
	\RETURN
\ELSE
	\FOR{$i \gets 1$ to $p$}
		\IF{$n \ge o_i$}
			\STATE $\mathcal{R}[index] \gets o_i$ 
			\STATE $n \gets n - o_i$
			\STATE APR6($index + 1$)
			\STATE $n \gets n + o_i$
		\ENDIF
	\ENDFOR
\ENDIF
\end{algorithmic}
\end{algorithm}

We now present an algorithm (APR6), based on the same algorithmic structure used so far, that given an $n$ and $\mathcal{O}=\{o_1,\ldots,o_p\}$, generates all possible compositions of $n$ using elements in $\mathcal{O}$. It is shown in Procedure~\ref{proc:compo}. An example recursion tree is shown in Figure~\ref{fig:compositions}. It is easy to modify Procedure~\ref{proc:compo} to generate all integer partitions, instead of integer compositions (analogous to how we generated combinations by using a permutations generating algorithm).

It is possible that when given an $\mathcal{O}=\{o_1,\ldots,o_p\}$, we have a limited number of some or all of the numbers. For example, we could have $n=15$; $\mathcal{O}=\{1,\ldots,15\}$ with the condition that no number can be used more than twice. This can be handled by maintaining an auxilary array $CountArray$ parallel to $\mathcal{O}$.

\begin{figure*}[!htp]
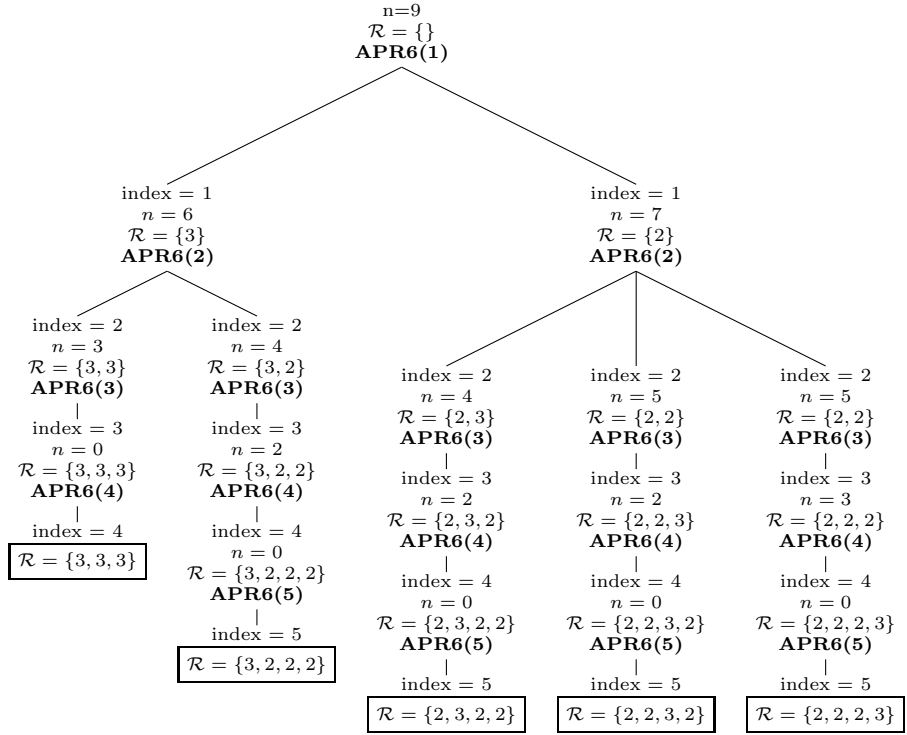
 \centering
{\scriptsize \Tree 
[
	.{n=9 \\ $\mathcal{R}=\{\}$ \\ \textbf{APR6(1)}} 
	[
		.{index = 1 \\ $n=6$ \\ $\mathcal{R}=\{3\}$ \\ \textbf{APR6(2)}} 
		[
			.{index = 2 \\ $n=3$ \\ $\mathcal{R}=\{3,3\}$ \\ \textbf{APR6(3)}}
			[
				.{index = 3 \\ $n=0$ \\ $\mathcal{R}=\{3,3,3\}$ \\ \textbf{APR6(4)}} 
				[
					.{index = 4 \\ \fbox{$\mathcal{R}=\{3,3,3\}$}} 
				] 
			]
		]
		[
			.{index = 2 \\ $n=4$ \\ $\mathcal{R}=\{3,2\}$ \\ \textbf{APR6(3)}}
			[
				.{index = 3 \\ $n=2$ \\ $\mathcal{R}=\{3,2,2\}$ \\ \textbf{APR6(4)}} 
				[
					.{index = 4 \\ $n=0$ \\ $\mathcal{R}=\{3,2,2,2\}$ \\ \textbf{APR6(5)}}
					[
						.{index = 5 \\ \fbox{$\mathcal{R}=\{3,2,2,2\}$}}
					]
				] 
			]
		]
	]
	[
		.{index = 1 \\ $n=7$ \\ $\mathcal{R}=\{2\}$ \\ \textbf{APR6(2)}}
		[
			.{index = 2 \\ $n=4$ \\ $\mathcal{R}=\{2,3\}$ \\ \textbf{APR6(3)}}
			[
				.{index = 3 \\ $n=2$ \\ $\mathcal{R}=\{2,3,2\}$ \\ \textbf{APR6(4)}} 
				[
					.{index = 4 \\ $n=0$ \\ $\mathcal{R}=\{2,3,2,2\}$ \\ \textbf{APR6(5)}}
					[
						.{index = 5 \\ \fbox{$\mathcal{R}=\{2,3,2,2\}$}}
					]
				] 
			]
		]
		[
			.{index = 2 \\ $n=5$ \\ $\mathcal{R}=\{2,2\}$ \\ \textbf{APR6(3)}}
			[
				.{index = 3 \\ $n=2$ \\ $\mathcal{R}=\{2,2,3\}$ \\ \textbf{APR6(4)}} 
				[
					.{index = 4 \\ $n=0$ \\ $\mathcal{R}=\{2,2,3,2\}$ \\ \textbf{APR6(5)}}
					[
						.{index = 5 \\ \fbox{$\mathcal{R}=\{2,2,3,2\}$}}
					]
				] 
			]
		]
		[
			.{index = 2 \\ $n=5$ \\ $\mathcal{R}=\{2,2\}$ \\ \textbf{APR6(3)}}
			[
				.{index = 3 \\ $n=3$ \\ $\mathcal{R}=\{2,2,2\}$ \\ \textbf{APR6(4)}} 
				[
					.{index = 4 \\ $n=0$ \\ $\mathcal{R}=\{2,2,2,3\}$ \\ \textbf{APR6(5)}}
					[
						.{index = 5 \\ \fbox{$\mathcal{R}=\{2,2,2,3\}$}}
					]
				] 
			]
		]
	]
]}
\caption{Recursion Tree of APR6 for $n=9$; $\mathcal{O} = \{3,2\}$}
\label{fig:compositions}
\end{figure*}

\section{Conclusion}
\label{sec:conc}

Algorithms to generate combinatorial structures will always be needed, simply because of the fundamental nature of the problem. One could need different kinds of combinatorial structures for different kinds of input. Thus, having one common effective algorithm, as the one proposed in this paper, which solves many problems, would be useful.

One must note that a few other non-trivial adaptations - generating all palindromes of a string, generating all positive solutions to a diophantine equation with positive co-efficients, etc, are also possible.

\section{Future Research}
\label{sec:future}

The proposed algorithm conclusively solves quite a few fundamental problems in combinatorics. Further research directed towards achieving even more combinatorial structures while preserving the essence of the proposed algorithm should assume topmost priority.

One could also perform some probabilistic analysis and derive tighter average case bounds on the runtime of the algorithms.

\section*{Acknowledgement}

We would like to thank Prof. K V Vishwanatha from the CSE Dept. of R V College of Engineering, Bangalore for his comments.

\bibliographystyle{amsplain}
\bibliography{APR}
\label{lastpage}

\end{document}